\documentclass[preprint,1p,number,12pt,times]{elsarticle}
\usepackage{amsmath,amsthm,amssymb}
\usepackage{tikz}
\usepackage{graphicx}
\usepackage{hyperref}

\newcommand{\bx}{\mathbf{x}}
\newcommand{\by}{\mathbf{y}}
\newcommand{\bn}{\mathbf{n}}
\newcommand{\cO}{\mathcal{O}}
\newcommand{\real}{\mathbb{R}}
\newcommand{\complex}{\mathbb{C}}
\newcommand{\znat}{\mathbb{Z}}

\newcommand{\eff}{\mathrm{eff}}
\newcommand{\ext}{\mathrm{(ext)}}

\newcommand{\abs}[1]{\left|{#1}\right|}
\newcommand{\norm}[1]{\left\|{#1}\right\|}
\newcommand{\M}[1]{\left({#1}\right)}

\newcommand{\Mcb}[1]{\left\{{#1}\right\}}
\newcommand{\ceil}[1]{\lceil{#1}\rceil}
\newcommand{\conj}[1]{\overline{#1}}

\newcommand{\rb}[1]{\raisebox{1em}{\rotatebox{90}{#1}}}
\newcommand{\rbb}[1]{\raisebox{3em}{\rotatebox{90}{#1}}}

\newtheorem{remark}{Remark}
\newtheorem{theorem}{Theorem}
\newcommand{\secref}[1]{\S\ref{#1}}
\newcommand{\figref}[1]{Figure~\ref{#1}}
\newcommand{\thmref}[1]{Theorem~\ref{#1}}


\newlength{\cbheight}
\setlength{\cbheight}{0.25\textheight}

\newcommand{\colorbar}{\raisebox{0.5em}{\includegraphics[height=\cbheight]{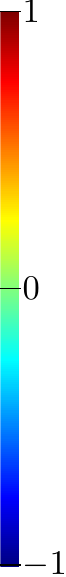}}}

\journal{Wave Motion}
\begin{document}

\begin{frontmatter}

\title{Exterior cloaking with active sources in\\ two dimensional acoustics}

\author[ut]{Fernando Guevara Vasquez\corref{cor}}
\ead{fguevara@math.utah.edu}

\author[ut]{Graeme W. Milton}
\ead{milton@math.utah.edu}

\author[ut]{Daniel Onofrei}
\ead{onofrei@math.utah.edu}

\address[ut]{Mathematics Department\\ 
University of Utah\\ 
155 S 1400 E RM 233\\
Salt Lake City UT 84112, USA.}

\cortext[cor]{Corresponding author.}

\begin{abstract}
We cloak a region from a known incident wave by surrounding the region with
three or more devices that cancel out the field in the cloaked region without
significantly radiating waves. Since very little waves reach scatterers 
within the cloaked region, the scattered field is small and the
scatterers are for all practical purposes undetectable. The devices are
multipolar point sources that can be determined from Green's formula and
an addition theorem for Hankel functions. The cloaking devices are
exterior to the cloaked region.
\end{abstract}

\begin{keyword}
Active cloaking \sep Acoustic waves \sep Helmholtz equation \sep Green's
formula
\MSC 74J05 
\sep 35J05 
\end{keyword}

\end{frontmatter}

\section{Introduction}

Interest in cloaking has surged, as reflected in the many recent reviews
\cite{Chen:2010:ACT,Greenleaf:2009:CDE,Alu:2008:PMC}. We introduced a
new kind of cloaking for the two-dimensional Helmholtz equation
\cite{Vasquez:2009:AEC, Vasquez:2009:BEC}. Our approach uses {\em active
sources} (cloaking devices) to hide objects placed in an external
region. The advantages of our approach are: (a) by the superposition
principle a cloak can be designed for a broad band of frequencies
(excluding discretely many frequencies where the object being cloaked,
if non-absorbing, ``resonates'') and (b) the cloak does not need
materials with extreme properties which are hard to realize and
dispersive, as is the case in most transformation based cloaking
strategies (see e.g.  \cite{Greenleaf:2003:ACC, Pendry:2006:CEM,
Leonhardt:2006:OCM, Farhat:2009:CBW, Greenleaf:2009:CDE} -- though an
exception is \cite{Leonhardt:2009:BIN}). A significant drawback of our
approach is that we assume full knowledge of the incident field. Also
the active sources contain a monopole term which may be problematic for
applications in electromagnetism.

The problem of finding source distributions for cloaking is clearly
ill-posed in the sense that if it admits one solution then it admits
infinitely many solutions.  In \cite{Vasquez:2009:AEC, Vasquez:2009:BEC}
we computed particular solutions involving three point-like devices by
solving a constrained least-squares problem with the singular value
decomposition (SVD).  In an effort to explain rigorously our previous
results, we use the Green representation theorem for the Helmholtz
equation (in short {\em Green's formula}, see e.g
\cite{Colton:1983:IEM}), to derive explicitly a particular solution in
terms of the incident field (\thmref{thm:green}).  Another cloaking
method based on Green's formula is the active interior cloak introduced
by Miller \cite{Miller:2007:PC}, which uses single and double layer
potentials to cancel the incident field inside a closed curve. The same
principle is used in active sound control \cite{Peterson:2007:ACS,
Ffowcs:1984:RLA} to suppress noise.  Here we apply an addition theorem
for Hankel functions to replace the source distribution on a curve by a
few active sources, effectively connecting the cloaked region with the
exterior. The sources we get are, as the ansatz used in
\cite{Vasquez:2009:AEC, Vasquez:2009:BEC}, multipolar point-like
sources: they are given as a series of cylindrical radiating solutions
to the Helmholtz equation, centered at a few points. We also establish
convergence of the series to the fields required for cloaking and give a
specific configuration of sources (\secref{sec:example}) similar to the
one we found empirically in \cite{Vasquez:2009:AEC, Vasquez:2009:BEC}.

Other methods for obtaining exterior cloaks include those based on
complementary media  \cite{Lai:2009:CMI}, surface plasmonic resonances
\cite{Silvereinha:2008:CMA}, anomalous resonances in the vicinity of a
superlens \cite{Milton:2006:CEA,Nicorovici:2007:OCT,Milton:2008:SFG} and
waveguides \cite{Smolyaninov:2009:AME}.

A different idea is that of illusion optics \cite{Lai:2009:IOO} where
the goal is to hide an object and make it appear as another object.  The
Green's formula based approach that we present here also explains why
this can be done using active devices, as was recently observed
numerically \cite{Zheng:2010:EOC}. We give a way of explicitly
constructing the devices to such effect, without the need for solving a
least-squares problem (see Remark~\ref{rem:illusion}).

We work in the frequency domain at a fixed angular frequency
$\omega$. In a medium with constant speed of propagation
$c$, the wave field $u(\bx,\omega)$ satisfies the Helmholtz equation 
\begin{equation}
 \Delta u + k^2 u = 0, ~~ \text{for $\bx \in \real^2$},
 \label{eq:helm}
\end{equation}
where $k = 2\pi / \lambda$ is the wavenumber and $\lambda = 2\pi c /
\omega$ is the wavelength. For simplicity we drop the dependency on the
frequency and write $u(\bx) \equiv u(\bx,\omega)$.

\begin{remark}
We assume that the frequency $\omega$ is not a resonant frequency of the
scatterer we want to hide. Resonant frequencies are left for future studies.
\end{remark}

\begin{remark}
 Calling a cloak ``active'' can be ambiguous as one can refer to a cloak
 that can hide active sources \cite{Greenleaf:2007:FWI} or a cloak that
 uses active sources to hide objects
 \cite{Miller:2007:PC,Vasquez:2009:AEC,Vasquez:2009:BEC}. The cloaking
 method we present here is ``active'' in both senses as we use active
 devices to hide objects and in some simple situations it is possible to hide 
 sources (see Remark~\ref{rem:ext}).  However here we focus only on
 cloaking scatterers.  
\end{remark}

\section{Green's formula cloaks} 
\label{sec:construction} 
We present the active interior cloak \cite{Miller:2007:PC} and using an
addition theorem for Hankel functions show how this cloak can be
replaced by a few multipolar sources (\secref{sec:active}). The price to
pay for using a finite number of sources is that the cloaked region and
the region from which the object is invisible are smaller compared to
the active interior cloak. We then make some geometric considerations
(\secref{sec:example}) to show that with the particular approach of
\secref{sec:active} we need three or more sources to get a non-empty
cloaked region. 

\subsection{Active interior cloak}
\label{sec:miller}
Denote by $D$ the region of $\real^2$ that we wish to cloak from a known
incident field $u_i$. We also assume from now on that $D$ is a simply
connected bounded region of class $C^2$. The arguments in
\secref{sec:construction} can be easily generalized to the case where
$D$ is composed of several simply connected components. In order to
cloak $D$, we construct a solution $u_d$ to the Helmholtz equation
\eqref{eq:helm} (in $\real^2$ excluding the boundary of $D$) such that 
\begin{equation}
 u_d(\bx) = 
 \begin{cases}
 -u_i(\bx) & \text{for $\bx \in D$}\\
 0 & \text{otherwise.}
 \end{cases}
 \label{eq:ud}
\end{equation}
Hence the total field $u_d + u_i$ vanishes in $D$ and is
indistinguishable from $u_i$ outside $D$. If a scatterer is placed
inside $D$ the scattered field $u_s$ resulting from the incident field
$u_d + u_i$ is zero.  Assuming $u_i$ is an analytic solution to the
Helmholtz equation \eqref{eq:helm} inside $D$, a field $u_d$ satisfying
\eqref{eq:ud} can be constructed using Green's formula (see e.g.
\cite{Colton:1983:IEM})
\begin{equation}
u_d(\bx) = \int_{\partial D} dS_\by \Mcb { - (\bn(\by) \cdot \nabla_\by
u_i(\by))
G(\bx,\by) + u_i(\by) \bn(\by) \cdot \nabla_\by G(\bx,\by) },
\label{eq:green}
\end{equation}
where $\bn(\by)$ is the unit outward normal to $D$ at a point $\by$ on
the boundary $\partial D$ and the Green's function for the two
dimensional wave equation is
\begin{equation} 
 G(\bx,\by) = \frac{i}{4} H_0^{(1)}(k\abs{\bx-\by}),
\end{equation}
where $H^{(1)}_n$ is the $n-$th Hankel function of the first kind
\cite[\S 9]{Abramowitz:1972:HMF}. The first term in the integrand in
\eqref{eq:green}  can be interpreted as the potential due to a
distribution of monopoles on the boundary $\partial D$ (the single
layer) while the second term can be interpreted as the potential due to
a distribution of dipoles oriented normal to $\partial D$ (the double
layer).

In the frequency domain, the cloaking scheme we obtain is the same
proposed by Miller \cite{Miller:2007:PC}, where the single and double
layer potentials in \eqref{eq:green} are simulated with many sources
completely surrounding the cloaked region $D$. We give an example of the
field $u_d$ generated by Green's formula \eqref{eq:green} in
\figref{fig:green_ud}. The effect of this active interior cloak can be seen in
\figref{fig:green}, where it is used to hide a kite shaped scatterer
\cite{Colton:1998:IAE} with homogeneous Dirichlet boundary conditions.
The field is virtually zero inside the cloaked region, so there is very
little scattered waves to detect the object.

\begin{figure}
\begin{center}
 \includegraphics[width=0.4\textwidth]{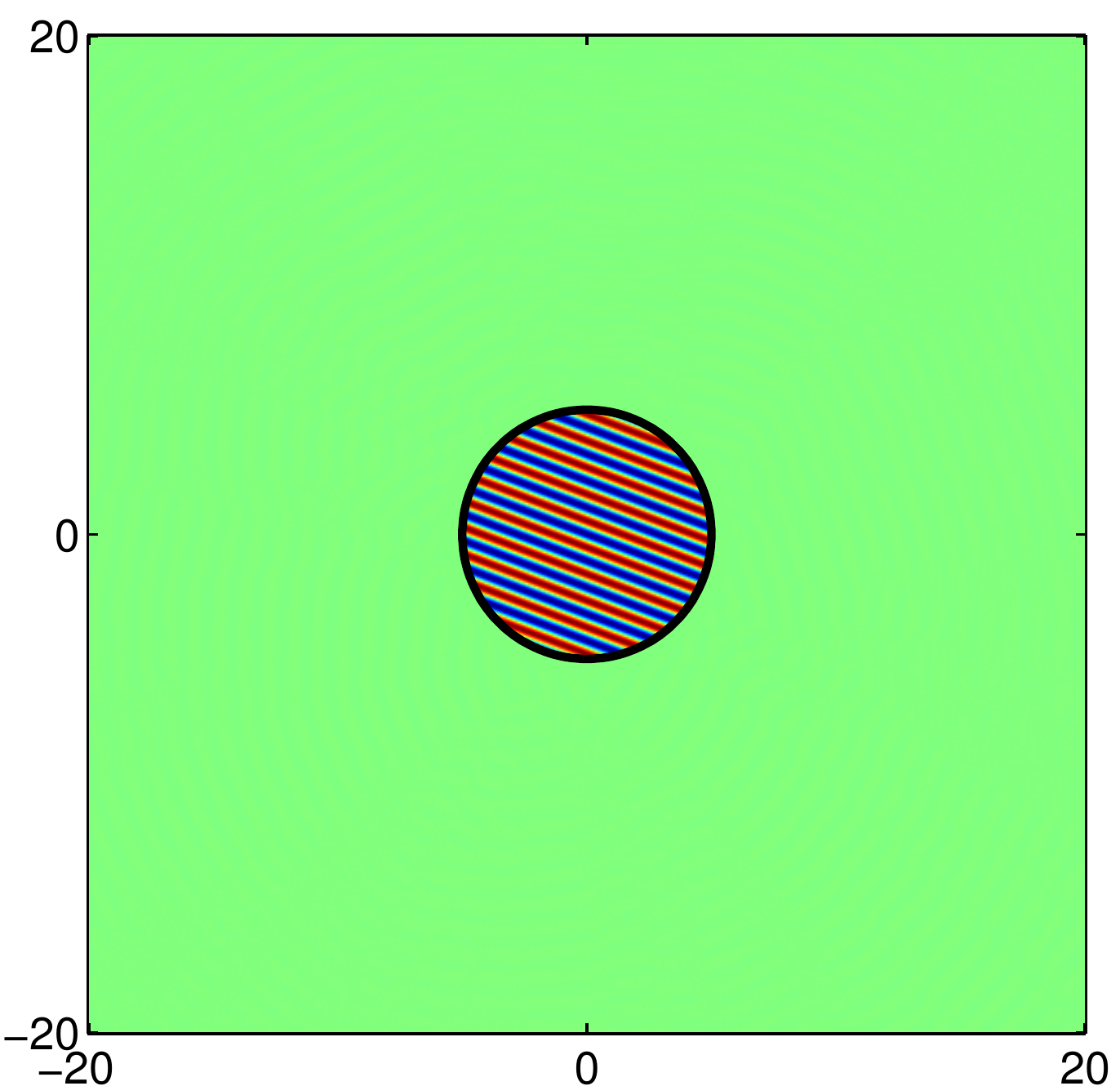}
 \colorbar
\end{center}
\caption{The field $u_d$ generated by Green's formula \eqref{eq:green}
The region $D$ is the disk in thick lines with radius $5\lambda$ and
the incident field is a plane wave with angle $5\pi/13$. The field $u_d$
is very close to the incident field inside $D$ and zero outside $D$. The
integral in \eqref{eq:green} is evaluated with the trapezoidal rule on
$2^8$ equally spaced points on $\partial D$. The axis units are in
wavelengths $\lambda$.} \label{fig:green_ud}
\end{figure}

\begin{figure}
\begin{tabular}{ccc}
 \includegraphics[width=0.4\textwidth]{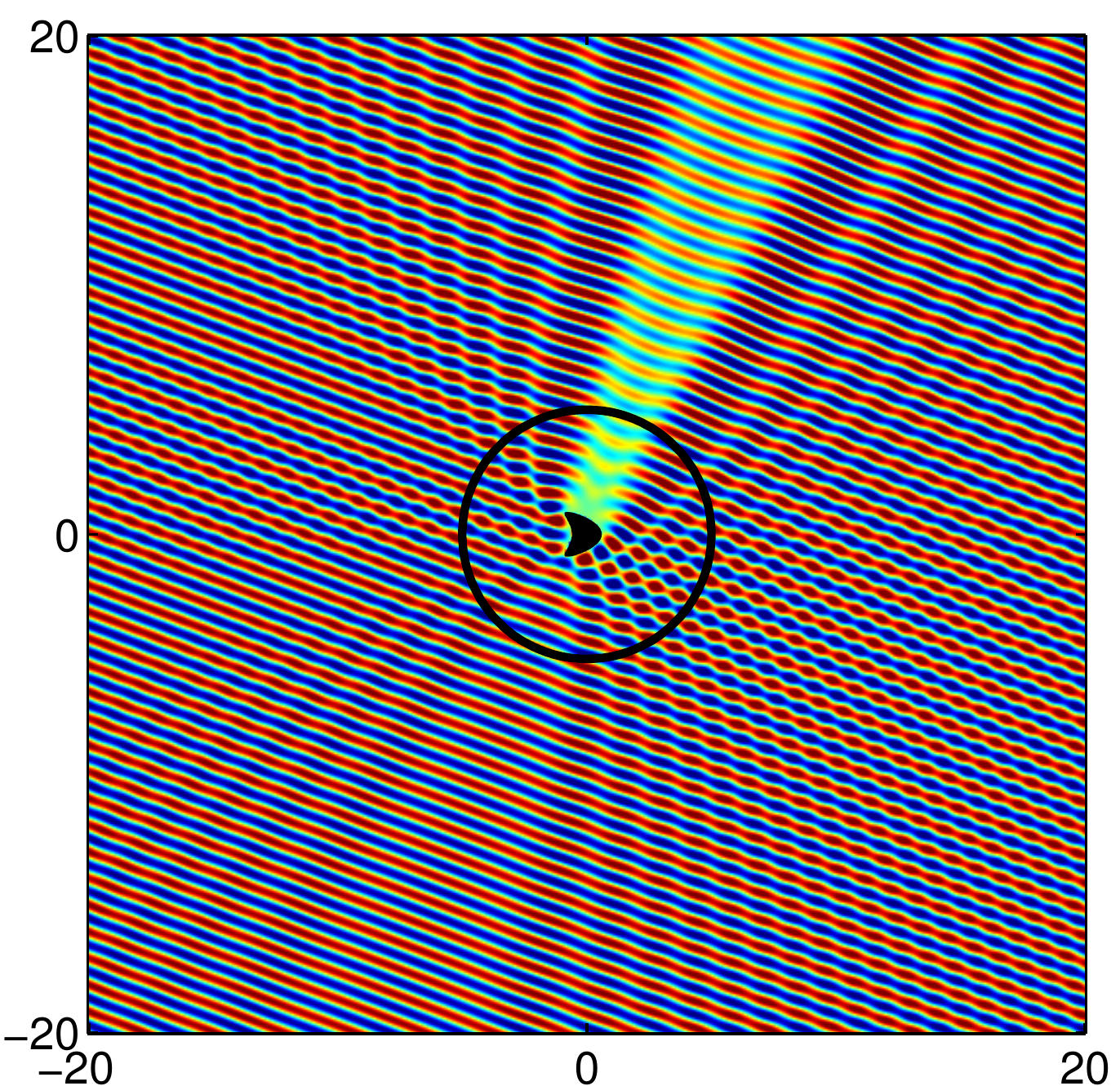} &
 \includegraphics[width=0.4\textwidth]{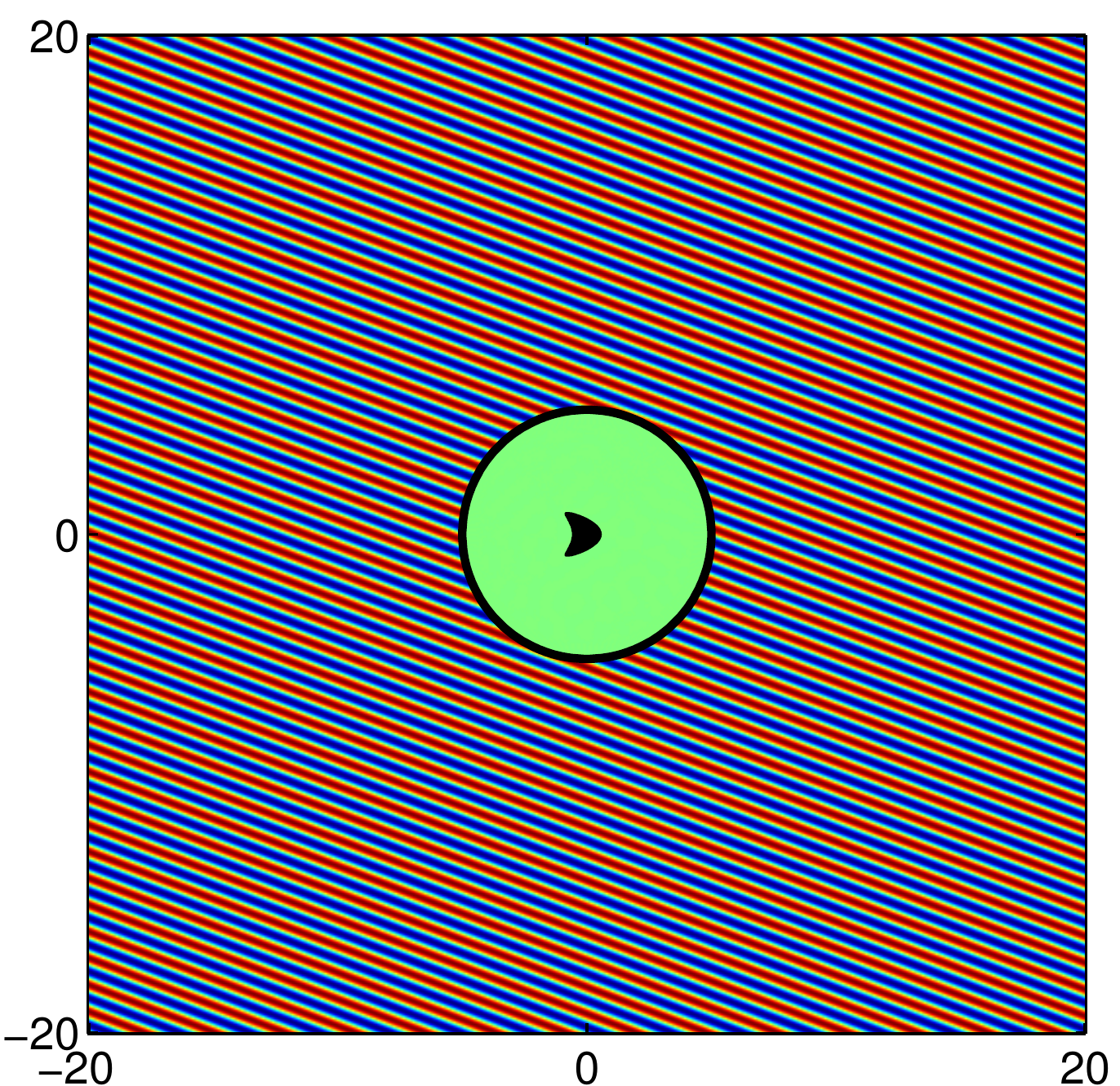} &
 \colorbar\\
 (a) & (b) &
\end{tabular}
\caption{Active interior cloak (a) inactive and (b) active. The region $D$
where Green's formula is applied is the circle of radius $5\lambda$ in
thick lines. The field inside the cloaked region $D$ is virtually zero,
so that outside the cloaked region the field is indistinguishable from
the incident plane wave with direction $5\pi/13$. The axis units are in
wavelengths $\lambda$.}
\label{fig:green}
\end{figure}

\subsection{Active exterior cloak}
\label{sec:active}
Our aim is to replace the single and double layer potentials on
$\partial D$ appearing in \eqref{eq:green} by a few $n_{dev}$ multipolar
sources (what we  call ``cloaking devices'') located at some points
$\bx_j\notin \partial D$. The advantage being that the cloaked region is
no longer completely enclosed by a surface. The field generated by such
sources can be written formally as
\begin{equation}
 u_d^{\ext}(\bx) = \sum_{j=1}^{n_{dev}} \sum_{m=-\infty}^\infty b_{j,m}
 V_m(\bx-\bx_j),
 \label{eq:uext}
\end{equation}
where the coefficients $b_{j,m} \in \complex$ are to be determined and
\[
 V_m(\bx) \equiv H_m^{(1)} (k \abs{\bx} ) \exp [ i m \arg(\bx) ]
\]
are radiating cylindrical waves. Here $\arg(\bx)$ denotes the
counterclockwise oriented angle from the vector $(1,0)$ to the vector
$\bx$. By a radiating solution to the Helmholtz equation, we mean it
satisfies the Sommerfeld radiation condition (see e.g.
\cite{Colton:1998:IAE}).

In \cite{Vasquez:2009:AEC,Vasquez:2009:BEC} we presented a numerical
scheme based on the singular value decomposition (SVD) to compute the
coefficients $b_{j,m}$ in a way that $u_d^{\ext}$ approximates  $u_d$ as
defined in \eqref{eq:ud}. We give next in \thmref{thm:green} one way of
obtaining these coefficients explicitly such that $u_d^{\ext}(\bx) =
u_d(\bx)$ for $\bx$ in a certain region $R \subset \real^2$, effectively
reducing the cloaked region to $D\cap R$.

Let us assign to each source $\bx_j$ a segment $\partial D_j$ of the
boundary. These segments are chosen such that they partition $\partial
D$ and  $\partial D_i \cap \partial D_j$ is empty or a single point when
$i\neq j$.  An example of this setup is given in \figref{fig:setup}. The
coefficients $b_{j,m}$ are given next.

\begin{theorem}
\label{thm:green}
Multipolar sources located at $\bx_j \notin \partial D$,
$j=1,\ldots,n_{dev}$, can be used to reproduce the active interior cloak in the
region
\[
 R = \bigcap_{j=1}^{n_{dev}} \Mcb{ \bx \in \real^2 ~~\Big|~~ |\bx-\bx_j|
 > \sup_{\by \in \partial D_j} \abs{\by - \bx_j} }.
\]
The coefficients $b_{j,m}$ in
\eqref{eq:uext} such that $u_d^{\ext}(\bx) = u_d(\bx)$ for $\bx \in R$
are given by
\begin{equation}
\begin{aligned}
 b_{j,m} = \int_{\partial D_j} dS_\by &\{ 
 \M{ -\bn(\by) \cdot \nabla_\by u_i(\by) } \conj{U_m(\by-\bx_j)}\\
 &+ u_i(\by) \bn(\by) \cdot \nabla_\by \conj{U_m(\by-\bx_j)} \}
\end{aligned}
\label{eq:bjm}
\end{equation}
for $j=1,\ldots,n_{ext}$ and for $m\in \znat$. Here $U_m(\bx)$ are
entire cylindrical waves,
\[
 U_m(\bx) \equiv J_m(k\abs{\bx}) \exp[ i m \arg(\bx) ].
\]
\end{theorem}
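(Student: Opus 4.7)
The plan is to apply Graf's addition theorem to the free-space Green's function and expand it into multipolar waves centered at each source $\bx_j$. Recall that for $\abs{\bx-\bx_j}>\abs{\by-\bx_j}$ the theorem gives
\[
H_0^{(1)}(k\abs{\bx-\by}) \;=\; \sum_{m=-\infty}^{\infty} V_m(\bx-\bx_j)\,\conj{U_m(\by-\bx_j)},
\]
so that $G(\bx,\by)$ factors (up to the prefactor $i/4$) into a series of products of a radiating cylindrical wave centered at $\bx_j$ with an entire cylindrical wave in the $\by$ variable. Since $\bx\in R$ guarantees that $\abs{\bx-\bx_j}$ strictly exceeds $\sup_{\by\in\partial D_j}\abs{\by-\bx_j}$, the hypothesis of Graf's theorem holds simultaneously for every $\by$ on the segment assigned to $\bx_j$.

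With this in hand, I would split the boundary integral in \eqref{eq:green} according to the partition $\partial D=\bigcup_j \partial D_j$, substitute the Graf expansion into both $G(\bx,\by)$ and $\nabla_\by G(\bx,\by)$ (differentiating the series term by term in $\by$, which is legitimate since each $U_m$ is entire), and then swap the $m$-summation with the integration on $\partial D_j$. This produces
\[
u_d(\bx) \;=\; \sum_{j=1}^{n_{dev}}\sum_{m\in\znat} b_{j,m}\,V_m(\bx-\bx_j),
\]
with $b_{j,m}$ of exactly the form \eqref{eq:bjm} (up to the $i/4$ prefactor, naturally absorbed into either the coefficients or into $V_m$), thus proving $u_d^{\ext}(\bx)=u_d(\bx)$ on $R$.

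The main obstacle is justifying this interchange of summation and integration, i.e., establishing uniform convergence of the Graf series and its termwise $\by$-gradient over the compact set $\by\in\partial D_j$ for each fixed $\bx\in R$. The definition of $R$ supplies a constant $\rho=\rho(\bx,j)<1$ with $\abs{\by-\bx_j}/\abs{\bx-\bx_j}\le\rho$ for every $\by\in\partial D_j$. Combined with the standard large-$m$ asymptotics $J_m(z)\sim(z/2)^m/m!$ and $H_m^{(1)}(z)\sim -(i/\pi)(m-1)!(2/z)^m$, each term $V_m(\bx-\bx_j)\conj{U_m(\by-\bx_j)}$ is dominated in modulus by a constant multiple of $\rho^{\abs{m}}/\abs{m}$, yielding absolute and geometric convergence uniform in $\by\in\partial D_j$. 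The same estimate, together with the recurrence $J_m'(z)=\tfrac{1}{2}(J_{m-1}(z)-J_{m+1}(z))$, bounds $\nabla_\by U_m(\by-\bx_j)$ and legitimizes the termwise differentiation. Once uniform convergence is secured, dominated convergence allows the swap; the $C^2$ regularity of $\partial D$ and the analyticity of $u_i$ near $D$ ensure that the integrands are continuous and each $b_{j,m}$ is well defined and finite.
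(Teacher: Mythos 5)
Your proposal is correct and follows essentially the same route as the paper: split the Green's formula integral over the partition $\partial D = \bigcup_j \partial D_j$, expand $G(\bx,\by)$ via Graf's addition theorem about each $\bx_j$ (valid on $\partial D_j$ precisely because $\bx\in R$), and justify swapping the sum and integral for both the single- and double-layer terms using large-order Bessel/Hankel asymptotics to get absolute, uniform (geometric) convergence. The only cosmetic differences are that you bound $J_m'$ via the recurrence rather than the direct asymptotic the paper quotes, and you correctly flag the $i/4$ prefactor bookkeeping.
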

\begin{proof}
Clearly, the integral over the whole boundary $\partial D$ in Green's
formula \eqref{eq:green} can be written as the sum of the integrals
over the segments $\partial D_j$. For segment $\partial D_j$, we use
Graf's addition formula \cite[\S 9.1.79]{Abramowitz:1972:HMF} to express
the Green's function $G(\bx,\by)$ as a superposition of multipolar
sources located at $\bx_j$,
\begin{equation}
 \begin{aligned}
 G(\bx,\by) & = \frac{i}{4} H_0^{(1)}( k \abs{\bx - \bx_j - (\by -
 \bx_j)} )\\
 & = \frac{i}{4} \sum_{m=-\infty}^\infty V_m(\bx-\bx_j)
 \conj{U_m(\by-\bx_j)},
 \end{aligned}
 \label{eq:graf}
\end{equation}
where the series converges absolutely and uniformly on compact subsets
of $\abs{\bx - \bx_j} > \abs{\by - \bx_j}$ (this can be seen by e.g.
adapting Theorem 2.10 in \cite{Colton:1998:IAE} to two dimensions).
Splitting the integral in Green's formula \eqref{eq:uext} into
integrals over $\partial D_j$ and using the expansion \eqref{eq:graf} we
get for $\bx \in R$,
\begin{equation}
 \begin{aligned}
 u_d^{\ext}(\bx) = \sum_{j=1}^{n_{dev}} \int_{\partial D_j} dS_\by &\{ 
 \M{ -\bn(\by) \cdot \nabla_\by u_i(\by) } \sum_{m=-\infty}^\infty
 V_m(\bx-\bx_j)\conj{U_m(\by-\bx_j)}\\
 &+ u_i(\by) \bn(\by) \cdot \nabla_\by \sum_{m=-\infty}^\infty
 V_m(\bx-\bx_j)\conj{U_m(\by-\bx_j)} \}.
 \end{aligned}
 \label{eq:expansion}
\end{equation}

The desired result \eqref{eq:bjm} can be obtained by
rearranging the infinite sum and the integral. For the first term in the
integrals in \eqref{eq:expansion}, the uniform convergence of the series
\eqref{eq:graf} for $\by \in \partial D_j$ allows us to switch the
infinite sum and the integral. The second term involves the gradient
\begin{equation}
\begin{aligned}
 \nabla_\by \conj{U_m(\by - \bx_j)} &= k \frac{\by
 -\bx_j}{\abs{\by-\bx_j}} J_m'(k\abs{\by-\bx_j})
 \exp[-im\arg(\by-\bx_j)]\\
 &+ J_m(k\abs{\by-\bx_j}) (-im \exp[-im\arg(\by-\bx_j)] ) \nabla_\by
 \arg(\by - \bx_j).
\end{aligned}
\label{eq:gradubar}
\end{equation}
Since we assumed $\bx_j \notin \partial D$, the gradient
\begin{equation}
\nabla_\by \arg(\by - \bx_j) = 
\frac{(\by-\bx_j)^\perp}{\abs{\by-\bx_j}^2},
~~ \text{with}~ \bx^{\perp} \equiv \begin{bmatrix} -x_2 \\ x_1 \end{bmatrix},
\end{equation}
is bounded for $\by \in \partial D_j$. Using the series representation for
Bessel functions (see e.g. \cite[\S 9.3.1 ]{Abramowitz:1972:HMF} and
\cite[\S 3.4]{Colton:1998:IAE}) we can get the estimates
\begin{equation}
 \begin{aligned}
  J_n(t) &= \frac{t^n}{2^n n!} (1+\cO(1/n)), & J_n'(t) &=
  \frac{t^{n-1}}{2^n (n-1)!}(1+\cO(1/n))\\
  H_n^{(1)}(t) &=\frac{2^n(n-1)!}{\pi i t^n}(1+\cO(1/n))
 \end{aligned}
\label{eq:besselest}
\end{equation}
valid for $t>0$ as $n\to\infty$, uniformly on compact subsets of $(0,\infty)$.
Using \eqref{eq:besselest} and the expression for the gradient
\eqref{eq:gradubar} we can estimate the terms
\begin{equation}
 V_m(\bx-\bx_j) \nabla_\by \conj{U_m(\by - \bx_j)} =
 \cO\M{\frac{\abs{\by -\bx_j}^{m-1}}{\abs{\bx-\bx_j}^m}} 
+\cO\M{m\frac{\abs{\by-\bx_j}^m}{\abs{\bx-\bx_j}^m}}
\end{equation}
as $m\to\infty$ uniformly on compact subsets of $\abs{\bx - \bx_j} >
\abs{\by-\bx_j}$. Therefore the series in the second integrand of
\eqref{eq:expansion} converges absolutely and uniformly in $\partial
D_j$ and the infinite sum and the integral can be switched.
\end{proof}

Note that \thmref{thm:green} does not guarantee that the effective
cloaked region $D\cap R$ is not empty. However we do have that the
device's field vanishes far away from the devices (i.e.
$u_d^{\ext}(\bx) = 0$ for $|\bx|$ large enough) because $\real^2
\backslash R$ is bounded. Later in \secref{sec:example} we give a
specific configuration where $D\cap R \neq \emptyset$.

\begin{remark}
\label{rem:ext}
 In order to guarantee that the field $u_d(\bx)$ in Green's formula
 \eqref{eq:green} vanishes outside $D$, we need an analytic incident
 field $u_i(\bx)$ inside $D$. If the field $u_i(\bx)$ is a radiating and
 $C^2$ solution to the Helmholtz equation \eqref{eq:helm} outside $D$
 (as is the case when there are sources and non-resonant scatterers
 inside $D$), then Green's formula \eqref{eq:green} converges outside
 $D$ to $u_i(\bx)$ and inside $D$ to zero (see e.g.
 \cite{Colton:1983:IEM}).  This is the principle behind noise
 suppression \cite{Ffowcs:1984:RLA} and could be used to cloak active
 sources and scatterers in $D$, assuming they are known. 
 \end{remark}
 \begin{remark}
 \label{rem:illusion}
 Clearly the same Green's formula approach can
 be used to create illusions with active sources \cite{Zheng:2010:EOC}:
 one can simultaneously cloak an object and generate waves that
 correspond to the scattering from a completely different object (the
 ``virtual object''). All we need is knowledge of the scattered
 field $u_s^{virt}(\bx)$ generated by hitting the virtual object with
 the incident $u_i(\bx)$. Since $u_s^{virt}(\bx)$ is a radiating
 solution to the Helmholtz equation,  to achieve illusion
 simply subtract $u_s^{virt}$ from $u_i$ in \eqref{eq:bjm} (this is
 assuming that $u_s^{virt}(\bx)$ is $C^2$ outside $D$).
\end{remark}

\begin{remark}
The proof of \thmref{thm:green} generalizes easily to the Helmholtz
equation in three dimensions.  We leave this generalization for future
studies.
\end{remark}

\begin{figure}
\begin{center}
\includegraphics[width=0.45\textwidth]{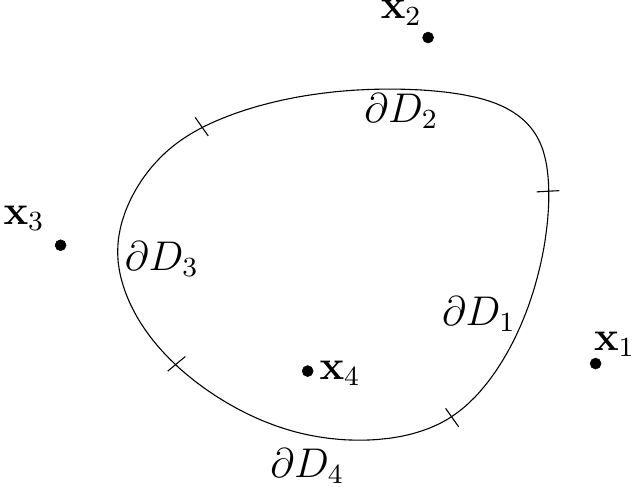}
\end{center}
\caption{Active exterior cloak construction. The contribution of portion
$\partial D_j$ to the single and double layer potentials in Green's
formula \eqref{eq:green} is replaced by a multipolar source located at
$\bx_j\notin \partial D$.}
\label{fig:setup}
\end{figure}

\subsection{An explicit example of an active exterior cloak}
\label{sec:example}
In this Section we describe one possible realization of the cloak
configuration presented in \thmref{thm:green}.  First notice that with
this particular method we need to have $n_{dev} \geq 3$ for a non-empty
effective cloaked region $D \cap R$ (with $D$ and $R$ given as in
\thmref{thm:green}). This is consistent with our numerical results in
\cite{Vasquez:2009:AEC,Vasquez:2009:BEC}, as we observed that at least
three devices are apparently needed to cloak plane waves with an
arbitrary direction of propagation.

To see that at least three devices are needed, first notice that
$\real^2 \backslash R = \bigcup_{j=1}^{n_{dev}} B_j$, where the $B_j$
are disks centered at the $j-$th device location $\bx_j$ and
$\partial D_j \subset B_j$.  We thus get
\[
 \partial D \subset \bigcup_{j=1}^{n_{dev}} B_j.
\]
If we have only one device ($n_{dev}=1$), then $D \subset \real^2
\backslash R$ and the effective cloaked region is empty. If we have two
devices and $D$ is simply connected, then we must also have $D \subset
\real^2 \backslash R$, as the union of two non-disjoint disks is simply
connected.

With three devices we take as an example the configuration shown in
\figref{fig:specific}. Here the devices are located at a distance
$\delta$ from the origin and are the vertices of an equilateral
triangle. The region $D$ where we apply Green's formula is the disk of
radius $\sigma$ centered at the origin. The circle $\partial D$ is
partitioned into three arcs $\partial D_j$, $j=1,2,3$ of identical
length which are chosen so that the distances $\sup_{\by \in \partial
D_j} \abs{\by - \bx_j}$ are equal for $j=1,2,3$. 

Simple geometric arguments show that the region $R$ of
\thmref{thm:green} is the complement of the union of the three disks in
gray in \figref{fig:specific}, with radius 
\begin{equation}
 r(\sigma,\delta) = ( (\sigma -\delta/2)^2 + 3\delta^2/4 )^{1/2},
 \label{eq:rsd}
\end{equation}
and centered at $\bx_j$, $j=1,2,3$.
To get an idea of the dimensions of the effective cloaked region $R\cap
D$ (in green in \figref{fig:specific}), we look at the radius of the
largest disk that can be inscribed inside. This disk has radius 
\begin{equation}
r_{\eff}(\sigma,\delta) = \delta - r(\sigma,\delta).
\label{eq:reff}
\end{equation}
Thus for fixed $\delta$ the largest
effective cloaked region is obtained when $\sigma = \delta /2$, which
corresponds to the case where the intersection of two of the disks in
gray in \figref{fig:specific} is a single point. Then the radius of the
largest disk that can be inscribed inside $R\cap D$ is
\begin{equation}
 r_{\eff}^*(\delta) = (1-\sqrt{3}/2) \delta \approx 0.13 \delta.
\label{eq:reffopt}
\end{equation}

\begin{figure}
\begin{center}
 \includegraphics[width=0.45\textwidth]{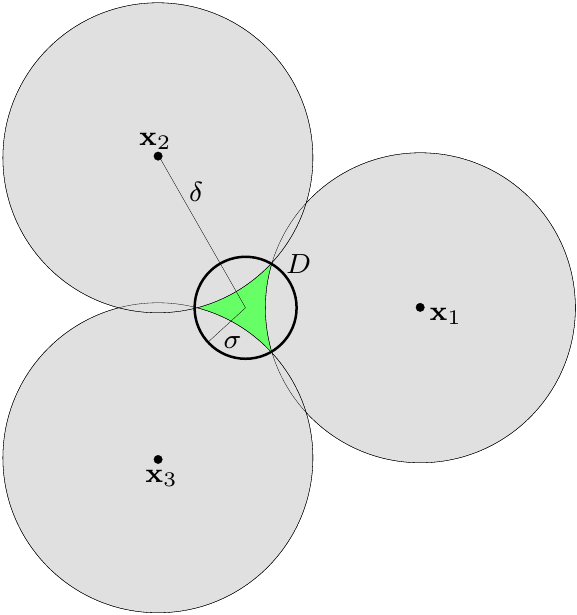}
\end{center}
\caption{A configuration for a cloak with three devices forming an
equilateral triangle such that $|\bx_j| = \delta$, $j=1,2,3$. The region
$D$ where we apply Green's formula is the disk of radius $\sigma$
centered at the origin. The region $\real^2 \backslash R$ appears in
gray. The green colored region is the effective cloaked region $D\cap
R$.}
\label{fig:specific}
\end{figure}

\section{Numerical experiments}
\label{sec:numexp}
We compare the Green's formula based method with the geometry described
in \secref{sec:example} and the maximal cloaked region size (i.e.
$\sigma = \delta/2$) to an SVD based method
\cite{Vasquez:2009:AEC,Vasquez:2009:BEC}. In the SVD based method three
distances are needed to describe the devices and the cloaked region: the
distance $\delta$ from the devices to the origin, the radius $\alpha$ of
the cloaked region and the radius $\gamma$ of the circle where we
enforce that the device's field be small. The numerical experiments in
\cite{Vasquez:2009:AEC,Vasquez:2009:BEC} were done with $\alpha =
\delta/5$ and $\gamma = 2 \delta$. To compare the cloaked regions for
both methods we choose
\[
 \alpha = r_{\eff}^*(\delta) = (1-\sqrt{3}/2) \delta
\]
and leave $\gamma$ unchanged. In this way the cloaked region for the SVD
method should be the largest disk that fits inside the effective cloaked
region $D \cap R$ of the Green's formula method.

We first compare in \figref{fig:comp} the device's fields and the total
field in the presence of a scatterer for both the SVD method and Green's
formula method. Here the devices are at a distance $\delta=10\lambda$
from the origin, with $k=1$ and $\lambda=2\pi$. In all our numerical
experiments the series \eqref{eq:uext} is truncated to $m=-M,\ldots,M$.
For the SVD method we follow the heuristic $M(\delta)=\ceil{(k
\delta/2)(1+\sqrt{3}/2)}$ in \cite{Vasquez:2009:BEC}, which for the
setup of \figref{fig:comp} gives $M(10\lambda)=59$. For comparison purposes
we used the same number of terms in the Green's formula method.  In both
methods, the device's field cancels out the incident field in region
near the origin without changing the incident plane wave. The region
where the total field  vanishes is larger for the Green's formula method
than for the SVD method. And for the former method, the cloaked region
seems larger than what is predicted by \secref{sec:example}.

The fields near the devices (the ``urchins'' in \figref{fig:comp}) are
very large as can be expected from the asymptotic behavior of Hankel
functions at the origin 
\[
 H^{(1)}_n(t) = \cO(t^{-|n|}) ~~ \text{as $t\to 0$ for 
 $n\in\znat \backslash \{0\}$.}
\]
The Green's formula allows us to replace each of the devices by a
closed curve containing the device and with appropriate single and
double layer potentials. The curves could be chosen as circles outside
of which the device's field has reasonable values (e.g. less than 100
times the magnitude of the incident field). For both methods these
circles do not touch, leaving ``throats'' connecting the cloaked region
to the exterior, so the cloaked region remains outside these
``extended'' devices. 

\begin{figure}
\begin{tabular}{c@{}ccc}
 & Green's formula & SVD &\\
 \rb{Device's field $u_d$} & 
 \includegraphics[width=0.4\textwidth]{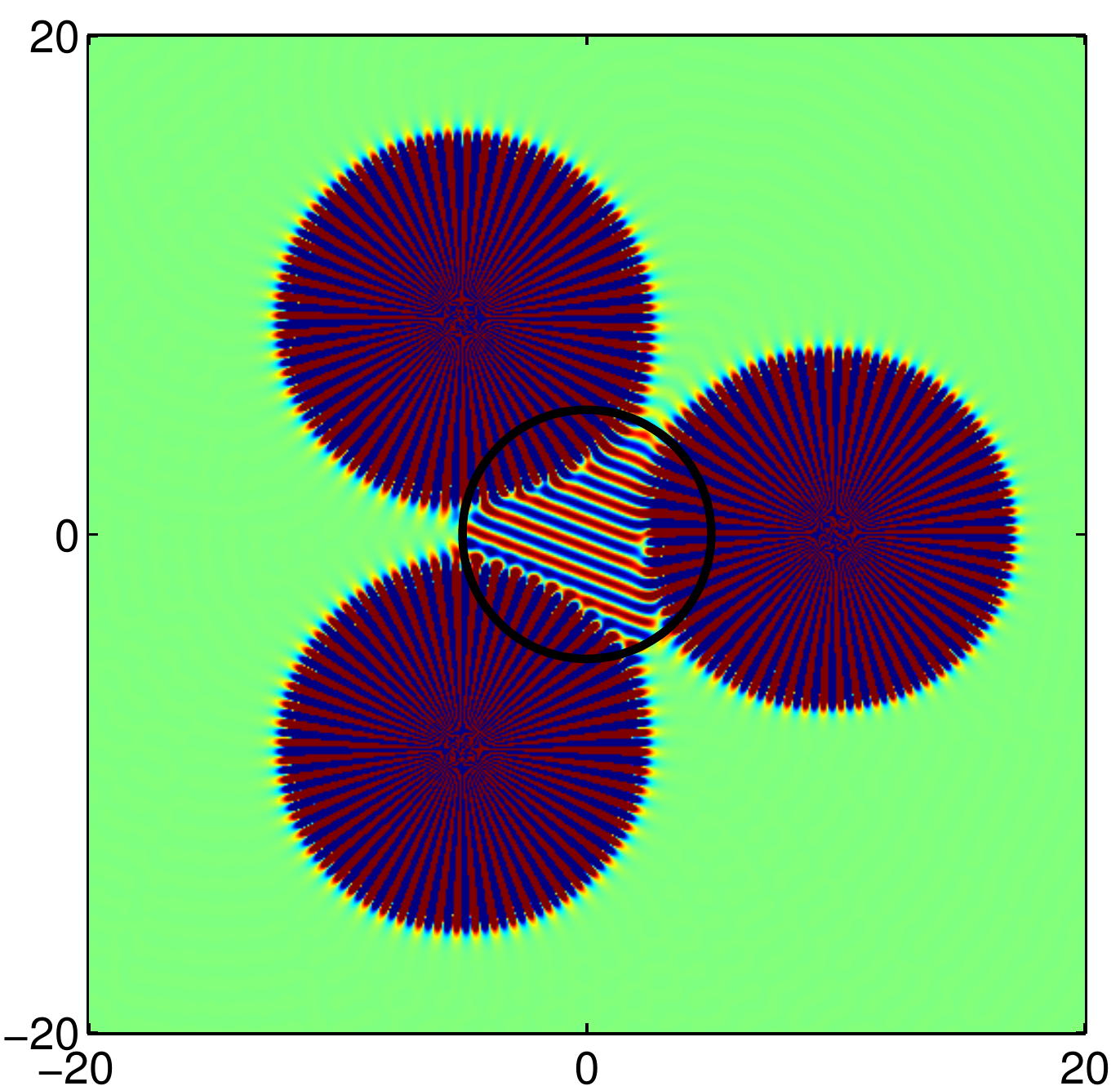} &
 \includegraphics[width=0.4\textwidth]{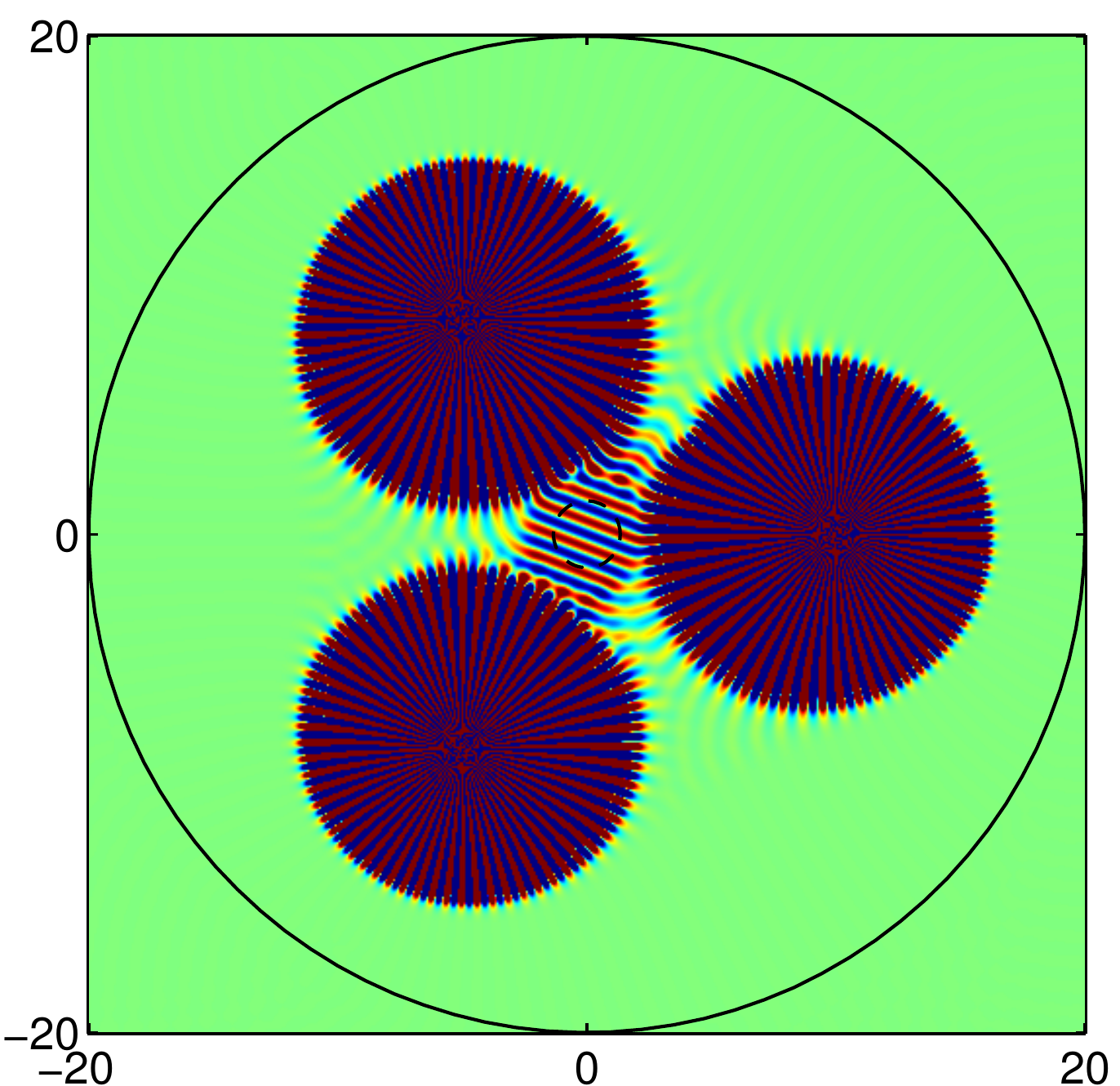} &
 \colorbar\\
 \rb{Total field $u_i + u_d + u_s$} &
 \includegraphics[width=0.4\textwidth]{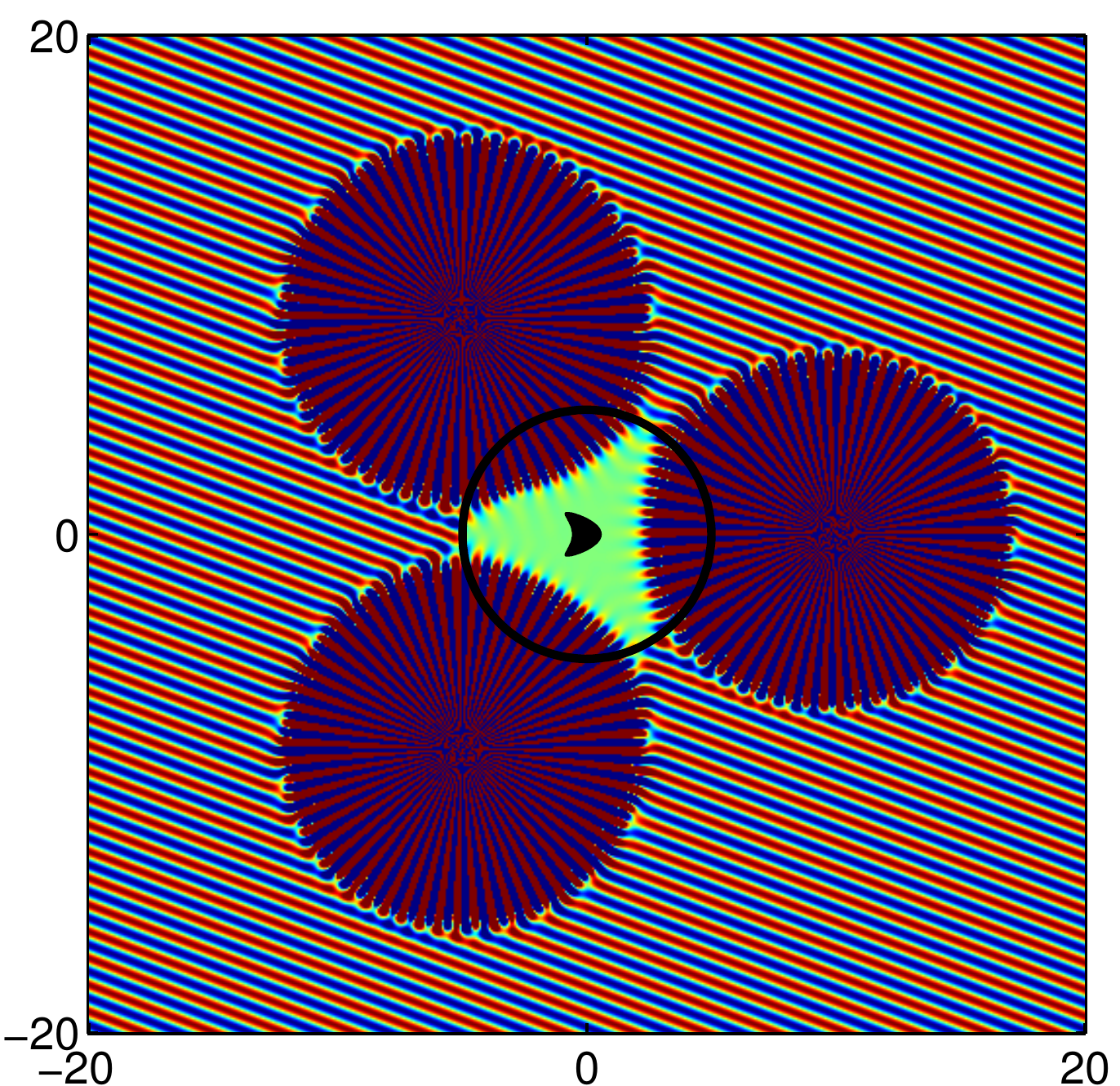} &
 \includegraphics[width=0.4\textwidth]{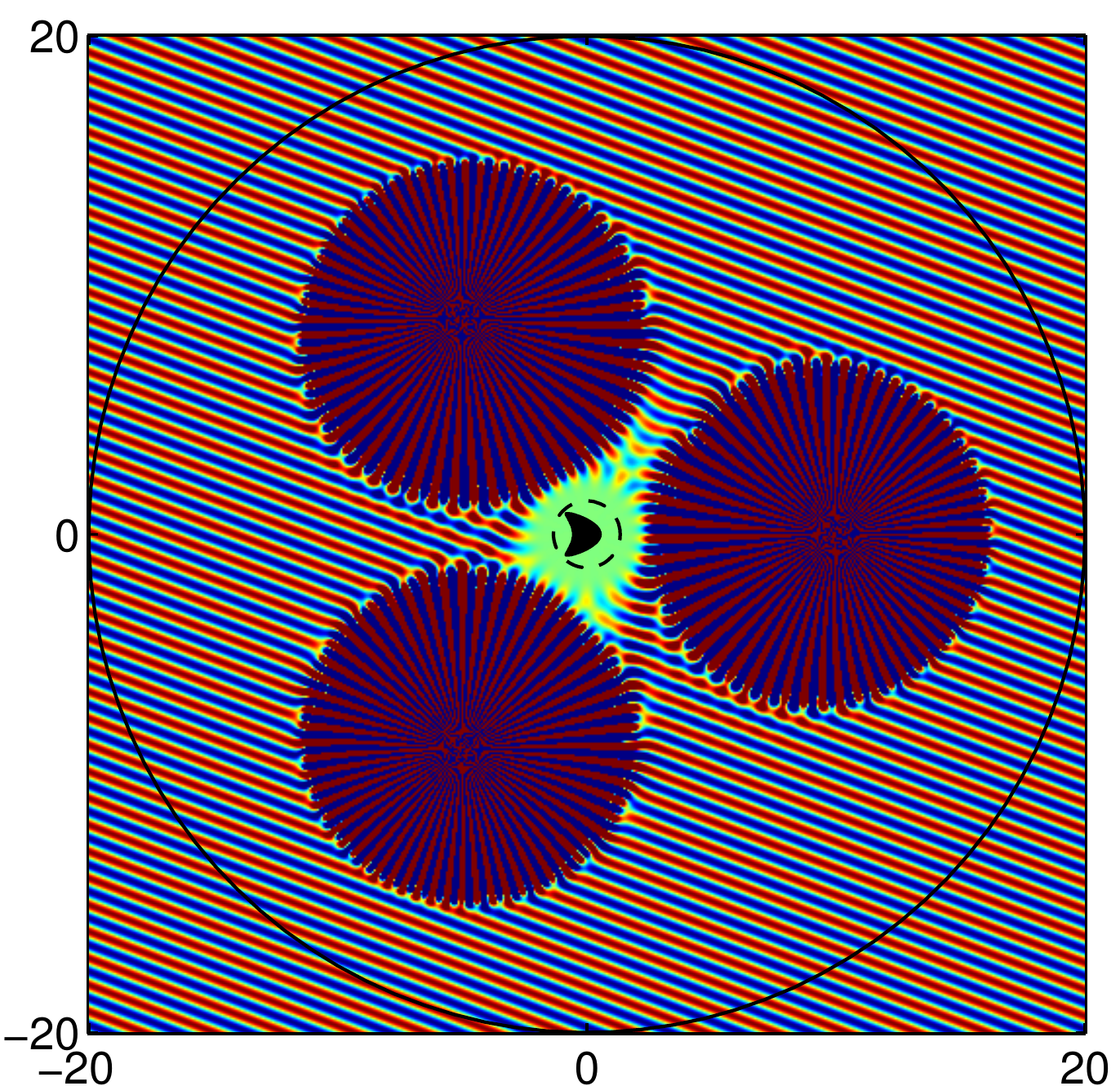} &
 \colorbar\\
\end{tabular}
\caption{Comparison of the Green's formula method (left column) and the
SVD based method \cite{Vasquez:2009:AEC,Vasquez:2009:BEC}
(right column). The circle in thick lines is where we apply Green's formula.
The dashed circle is where we enforce that the field be close to the
incident field and the solid circle is where we enforce that the
device's field is close to zero. The first row corresponds to the
device's field and the second row to the total field. } 
\label{fig:comp}
\end{figure}

We also considered larger cloaked regions in \figref{fig:perf},
keeping the same configuration as in \figref{fig:comp} but taking
$\delta \in [5,50]\lambda$. To evaluate the cloak performance we plot in
\figref{fig:perf} the quantity
\[
 \frac{\norm{u_i + u_d}_{L^2}}{\norm{u_i}_{L^2}},
 ~\text{on the circle $|\bx| = (1-\sqrt{3}/2)\delta$},
\]
which measures how well the device's field cancels the incident field
inside the cloaked region and
\[
 \frac{\norm{u_d}_{L^2}}{\norm{u_i}_{L^2}},
 ~\text{on the circle $|\bx| = 2\delta$},
\]
which measures how small the device's field is relative to the incident
field far away from the devices.\footnote{The values of $\|u_d\|/\|u_i\|$ 
reported in Figure 4b in \cite{Vasquez:2009:BEC} are, due to a
normalization mistake, up to a factor of $\sqrt{10}$ smaller than what
they should be. In the logarithmic scale
we use to display this quantity, the resulting shift is small
and our conclusions remain the same.} With the same number of terms
$M(\delta)$ the SVD method outperforms the Green's formula method in
both the cloaked region and far away from the devices. When $2M(\delta)$
terms are used for the Green's formula method, the relative errors
improve, but the convergence of the series in \eqref{eq:uext} seems
slower in the cloaked region than far away from the devices.

We estimate in \figref{fig:perfsz} the size of the ``extended'' devices
(i.e. the ``urchins'' where the device's field is large) by assuming
they are disks centered at the device's locations $\bx_i$. The disk
radius for the device centered at $\bx_i$ is estimated by finding the
closest intersection of the level set $|u_d(\bx)| = \beta$ with each of
the segments between $\bx_i$ and the other two devices. The quantity
plotted in \figref{fig:perfsz} is the maximum of these distances
rescaled by $\delta$ and is always below the radius at which the
extended cloaking devices do not leave gaps with the exterior (for both
cut-off values $\beta=5$ and $10$). The cost of having more terms in
Green's formula is that the extended devices leave narrower throats.
Presumably in the limit $M\to \infty$ these extended devices touch and
correspond to the region $\real^2 \backslash R$ in \thmref{thm:green}.

\begin{remark}
A natural question is whether changing the integrals over $\partial D_j$
in \eqref{eq:bjm} to integrals over subsets of $\partial D_j$ gives good
cloaking devices. Since in the region $R$ of \thmref{thm:green} the
device's field is identical to that of the active interior cloak, we can
reformulate the question as follows: does making small openings in $\partial D$
give a good active interior cloak? This is not the case because the
resulting active interior cloak has fields identical in $R$ to the
fields for the cloak taking all of
$\partial D$  into account minus the fields obtained by integrating
Green's formula \eqref{eq:green} on the portions of $\partial D$ that were
excluded. The excluded portions have a monopole and dipole distribution
that radiates and spoils the cloaking effect that we are after, even for
small openings.
\end{remark}

\begin{figure}
 \begin{center}
 \begin{tabular}{cc}
  $\|u_i + u_d\|/\|u_i\|$ on $|\bx| = (1-\sqrt{3}/2)\delta$ & $\|u_d\|/\|u_i\|$ on
  $|\bx| = 2\delta$\\
  \rbb{(percent)}\includegraphics[width=0.4\textwidth]{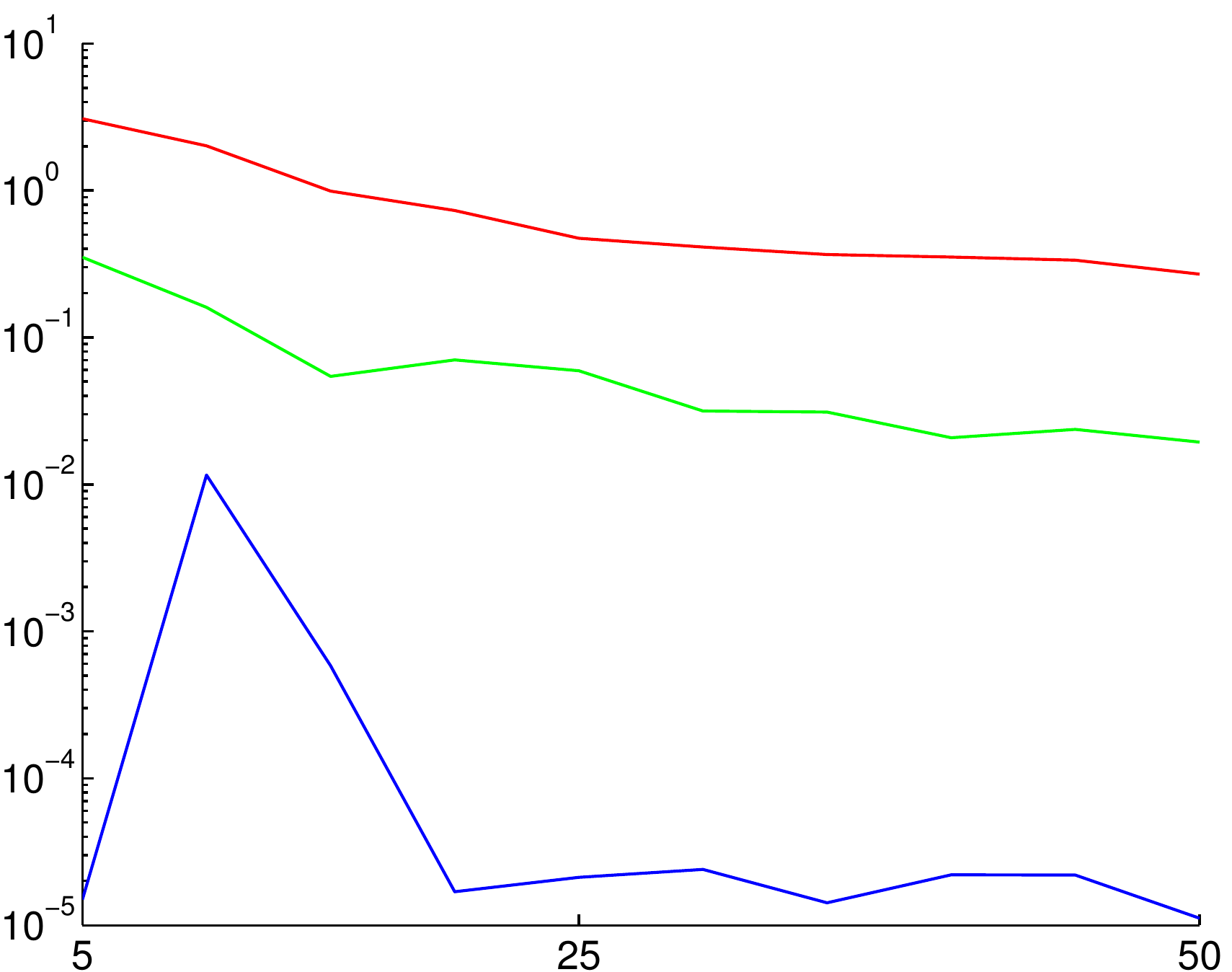} &
  \rbb{(percent)}\includegraphics[width=0.4\textwidth]{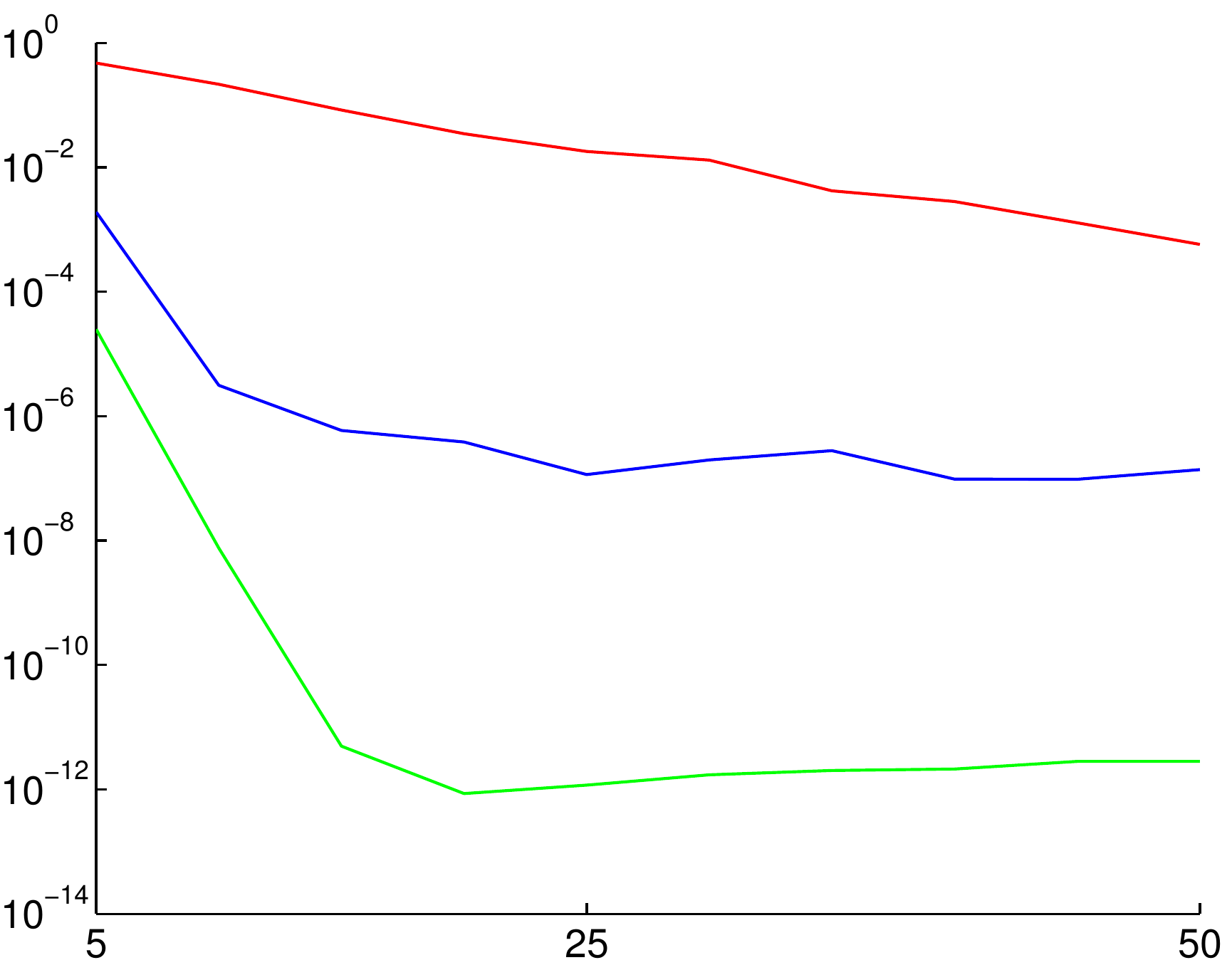}\\
  (a): $\delta$ (in $\lambda$) & (b): $\delta$ (in $\lambda$)
 \end{tabular}
 \end{center}
 \caption{Comparison of the cloak performance for the SVD method with
 $M(\delta)$ terms (blue) and the Green's identity method with
 $M(\delta)$ terms (red) and $2M(\delta)$ terms (green). In (a) we
 measure how small is the total field inside the cloaked region and in
 (b) how small is the device field far away from the devices.}
 \label{fig:perf}
\end{figure}

\begin{figure}
 \begin{center}
 \begin{tabular}{cc}
 cut-off $|u_d(\bx)|=100$ &  cut-off $|u_d(\bx)|=5$\\
 \rb{(device radius / $\delta$)}%
 \includegraphics[width=0.4\textwidth]{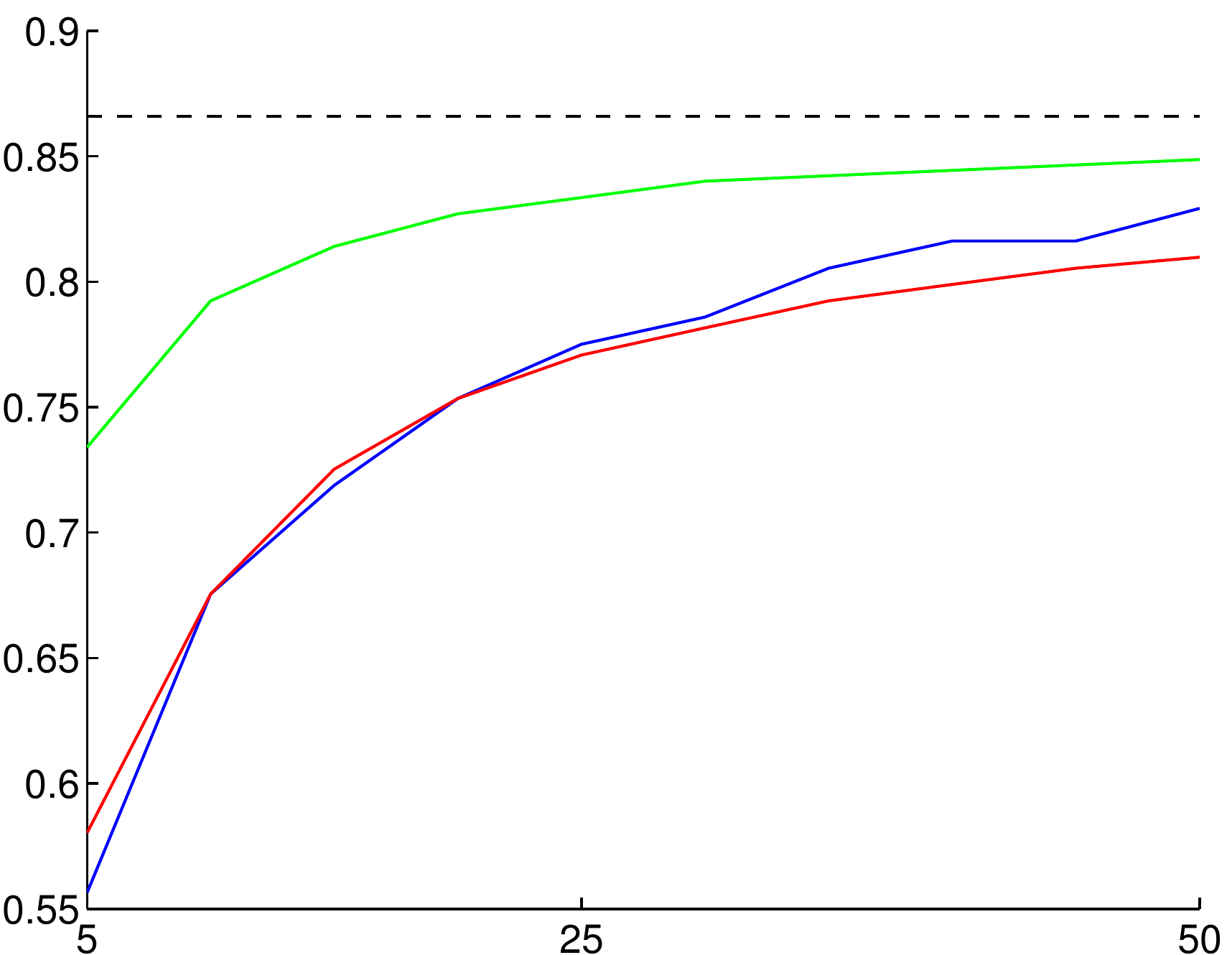} &
 \rb{(device radius / $\delta$)}%
 \includegraphics[width=0.4\textwidth]{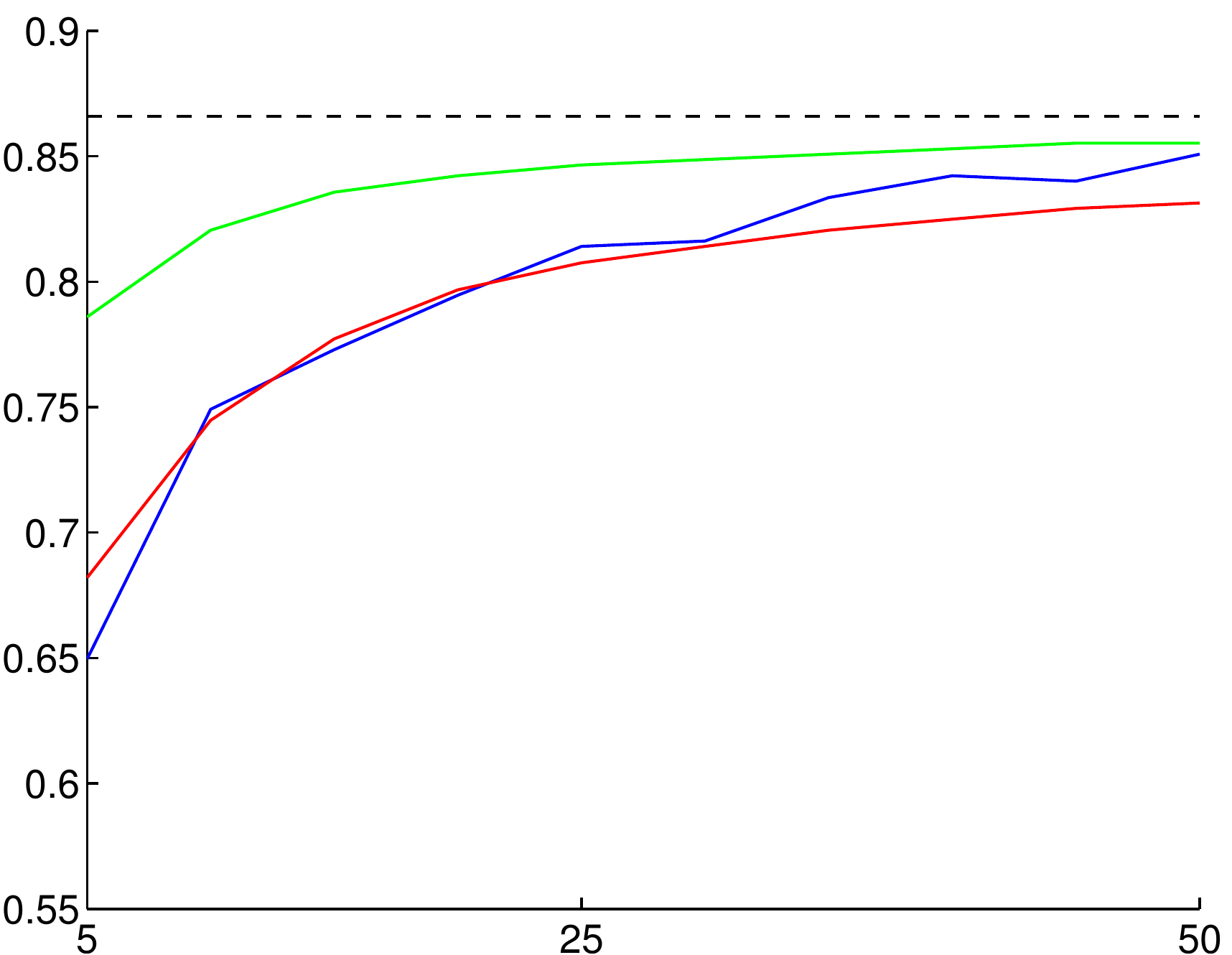}
 \\
 (a): $\delta$ (in $\lambda$) & (b): $\delta$ (in $\lambda$)
 \end{tabular}
 \end{center}
 \caption{Estimated device radius relative to $\delta$ for different
 values of $\delta$. The dotted line corresponds to the radius of the
 devices if they were touching circles. The SVD method with
 $M(\delta)$ terms is in blue, the Green's identity method with
 $M(\delta)$ terms is in red and with $2M(\delta)$ terms in green.}
 \label{fig:perfsz}
\end{figure}


\section*{Acknowledgments}
The authors are grateful for support from the National Science
Foundation through grant DMS-070978. GMW and FGV wish to thank the
Mathematical Sciences Research Institute, where this manuscript was
completed.

\bibliographystyle{abbrvnat}
\bibliography{wmbib}
\end{document}